\newcommand{\longversion}[1]{#1}
\newcommand{\shortversion}[1]{}
\newdimen\prevdp
\def\leftlabel#1{\noalign{\prevdp=\prevdepth
   \kern-\prevdp\nointerlineskip\vbox to0pt{\vss\hbox{\ensuremath{#1}}}\kern\prevdp}}
\newcommand{\NP}{\ensuremath{\mathsf{NP}}\xspace}
\newcommand{\NPC}{\ensuremath{\mathsf{NP}}-complete\xspace}
\newcommand{\el}{\ensuremath{\ell}\xspace}
\newcommand{\suc}{\ensuremath{\succ}\xspace}
\let\mydelta\delta
\renewcommand{\delta}{\ensuremath{\mydelta}\xspace}
\let\myalpha\alpha
\renewcommand{\alpha}{\ensuremath{\myalpha}\xspace}
\let\mystar\star
\renewcommand{\star}{\ensuremath{\mystar}}
\newcommand{\YES}{{\sc yes}\xspace}
\newcommand{\SAT}{{\sc SAT}\xspace}
\newcommand{\FP}{{\sc Fair Partitioning}\xspace}
\newcommand{\FD}{{\sc Fair Districting}\xspace}
\newcommand{\FCD}{{\sc Fair Connected Districting}\xspace}
\newcommand{\DCP}{{\sc \ensuremath{2}-Disjoint Connected Partitioning}\xspace}
\newcommand{\GP}{{\sc Greedy Partitioning}\xspace}
\newcommand{\GD}{{\sc Greedy Districting}\xspace}
\newcommand{\GCD}{{\sc Greedy Connected Districting}\xspace}
\renewcommand{\AA}{\ensuremath{\mathcal A}\xspace}
\newcommand{\CC}{\ensuremath{\mathcal C}\xspace}
\newcommand{\DD}{\ensuremath{\mathcal D}\xspace}
\newcommand{\EE}{\ensuremath{\mathcal E}\xspace}
\newcommand{\GG}{\ensuremath{\mathcal G}\xspace}
\newcommand{\HH}{\ensuremath{\mathcal H}\xspace}
\newcommand{\LL}{\ensuremath{\mathcal L}\xspace}
\newcommand{\OO}{\ensuremath{\mathcal O}\xspace}
\newcommand{\TT}{\ensuremath{\mathcal T}\xspace}
\newcommand{\UU}{\ensuremath{\mathcal U}\xspace}
\newcommand{\VV}{\ensuremath{\mathcal V}\xspace}
\newcommand{\XX}{\ensuremath{\mathcal X}\xspace}
\newcommand{\YY}{\ensuremath{\mathcal Y}\xspace}
\newcommand{\ZZ}{\ensuremath{\mathcal Z}\xspace}
\newcommand{\RB}{\ensuremath{\mathbb R^+}\xspace}
\newtheorem{proposition}{\bf Proposition}
\newtheorem{observation}{\bf Observation}
\newtheorem{theorem}{\bf Theorem}
\newtheorem{definition}{\bf Definition}
\newcommand{\eps}{\ensuremath{\varepsilon}\xspace}
\renewcommand{\epsilon}{\eps}
\newcommand{\ignore}[1]{}
\newcommand{\pr}{\ensuremath{\prime}}
\newcommand{\prr}{\ensuremath{{\prime\prime}}}
\renewcommand{\ge}{\geqslant}
\renewcommand{\le}{\leqslant}
\crefname{theorem}{theorem}{\bf Theorem}
\crefname{observation}{observation}{\bf Observation}
\crefname{lemma}{lemma}{\bf Lemma}
\crefname{corollary}{corollary}{\bf Corollary}
\crefname{proposition}{proposition}{\bf Proposition}
\crefname{definition}{definition}{\bf Definition}
\crefname{claim}{claim}{\bf Claim}
\crefname{reductionrule}{reduction rule}{\bf Reduction rule}
\title{Minimizing Margin of Victory for Fair Political and Educational Districting}
\author{
Ana-Andreea Stoica,\textsuperscript{\rm 1,2}
Abhijnan Chakraborty,\textsuperscript{\rm 1}
Palash Dey,\textsuperscript{\rm 3}
Krishna P. Gummadi\textsuperscript{\rm 1} \\
\textsuperscript{1}Max Planck Institute for Software Systems, Germany\\
\textsuperscript{2}Columbia University, USA\\
\textsuperscript{3}Indian Institute of Technology Kharagpur, India
}
\begin{document}

\maketitle

\begin{abstract}
In many practical scenarios, a population is divided into disjoint groups for better administration, e.g., electorates into political districts, employees into departments, students into school districts, and so on. However, grouping people arbitrarily may lead to biased partitions, raising  concerns of gerrymandering in political districting, racial segregation in schools, etc. To counter such issues, in this paper, we conceptualize such problems in a voting scenario, and propose \FD problem to divide a given set of people having preference over candidates into $k$ groups such that the {\it maximum margin of victory} of any group is minimized. We also propose the \FCD problem which additionally requires each group to be connected. We show that the \FD problem is \NPC for plurality voting even if we have only $3$ candidates but admits polynomial time algorithms if we assume $k$ to be some constant or everyone can be moved to any group. In contrast, we show that the \FCD problem is \NPC for plurality voting even if we have only $2$ candidates and $k = 2$. Finally, we propose heuristic algorithms for both the problems and show their effectiveness in UK political districting and in lowering racial segregation in public schools in the US.
\end{abstract}


\section{Introduction}
Dividing a population into smaller groups is often a practical necessity for better administration. For example, in many democratic countries (most notably, the countries who follow the Westminster System e.g., UK, Canada, India, Australia, or the Presidential System e.g., US, Brazil, Mexico, Indonesia), electorates are divided into electoral districts; in many organizations, employees are divided into administrative units like departments; students enrolled in public schools in the US are divided into school districts; and so on. However, the population is not homogeneous, it consists of people with different attributes -- gender, race, religion, ideology, etc. Dividing people arbitrarily may lead to biased grouping, manifested differently in different contexts.

In electoral districting, given the voting pattern of the electorate, political parties in power may draw the district boundaries that favor them -- a practice termed as {\it gerrymandering}~\cite{lewenberg2017divide}. For example, majority of the opposition supporters may be assigned to a few districts, such that the opponents become minority in other districts. Alternatively, the ruling party may want to ensure that it enjoys a healthy lead over its opponents in many districts, so that if a handful of its supporters change sides, it does not hamper the winnability. There have been several instances of such manipulations in electoral (re)districting in the US, starting as early as in 1812, by then Massachusetts governor Elbridge Gerry (the term {\it gerrymandering} originated after him)~\cite{issacharoff2002gerrymandering}. 

Public schools in the US are governed by school boards representing local communities, and are largely funded from local property taxes~\cite{corsi2015guide,chakraborty2019nudging}. Most of the students go to a school in the district they live, proximity playing an important role in the school choice~\cite{urban2015closest}. Thus, the way schools are distributed determines racial composition of their students and the revenue they earn. Several reports claim that wealthier, whiter communities have pushed policies so that white families can live in white-majority areas and attend white-majority schools~\cite{richards2014gerrymandering,chang2018we}. Despite the desegregating efforts following the landmark Supreme Court verdict in {\it Brown v. Board of Education case} in 1954 (which ruled racial segregation of children in public schools to be unconstitutional), $63\%$ of classmates of a white student are whites, compared to $48\%$ of all students being whites; similarly, $40\%$ of black and Hispanic students attend schools where over $90\%$ students are people of color~\cite{franken2019what}. As a consequence, a recent report by an educational non-profit EdBuild claimed that {\it ``Non-white school districts get \$23 Billion less than white districts, despite serving the same number of students"}~\cite{EdBuild}.

To counter such unfairness issues while dividing people into groups, in this paper, we conceptualize such problems in a voting scenario: the goal is to divide a set of $n$ people, each having a preference over a set of candidates, into $k$ groups. While the mapping of electoral districting into voting is direct and utilizes people's ideological preferences, we can think of context-specific mapping in other scenarios. For example, in school districting, we can think of students having preference according to their sensitive attributes (e.g., gender, race, etc.). 
Once the mapping is done, we propose the \FD problem to create $k$ groups such that the maximum {\it margin of victory} of any group is minimized, where {\it margin of victory} is defined as the number of people who need to change their preference to change the winner. We also propose the \FCD problem which additionally requires each group to be connected. Reducing margin of victory would lead to everyone's opinion within a group to be valuable, since the consensus of the group 
can be changed even if a small number of people change their preferences. In practical applications, it would lead to higher accountability from the elected candidate in political districting, lower racial segregation in schools, increase inter-discipline exchange in academic departments,  
and so on. 


\vspace{-2mm}
\subsection{Contribution}
We make the following contributions in this paper.
\begin{itemize}
\item We show that the \FD problem is \NPC for the plurality voting rule even when we have only $3$ alternatives and there is no constraint on the size of individual groups~[\Cref{thm:fp_npc}]. We complement this intractability result by proving existence of polynomial time algorithms,  
when (i) every voter can be moved to any group (which we term as the \FP problem)~[\Cref{thm:fp_poly_alt}], and (ii) we have a constant number of groups~[\Cref{thm:fp_poly_grp}] for the plurality voting rule.
  
\item We show that the \FCD problem is \NPC for the plurality voting rule even when there is only $2$ alternatives, $2$ districts, the maximum degree of any vertex in the underlying graph is $5$, and no constraint on the size of districts~[\Cref{thm:fd_npc}]. 
This shows that, although both \FD and \FCD problems are \NPC, \FCD problem is computationally harder than the \FD problem.
 
 \item We propose heuristic algorithms for both \FD and \FCD problems and show their effectiveness 
 in reducing margin of victory in electoral districts in the UK, as well as in lowering racial segregation in public schools in the US. 
\end{itemize}

\vspace{-2mm}
\subsection{Related Work}
Voting mechanisms have been at the center of historical, political, and sociological studies~\cite{barbara1987reliability,barbera1991voting,lublin1999paradox,erdelyi2015more}, due to their impact on local communities and society at large. The problem of \textit{unfair distribution of voters into districts}, otherwise known as {\it gerrymandering}, has received significant attention from academic researchers~\cite{butler1992redrawing,johnston2006disproportionality,issacharoff2002gerrymandering,bachrach2016misrepresentation}, and in particular from the computational social choice theorists, setting geographical~\cite{lewenberg2017divide} and social constraints~\cite{cohen2018gerrymandering,ito2019algorithms,borodin2018big} to population mobility. 

Central to the problem of gerrymandering is the concept of \textit{representation}: does a collective represent the choices or attributes of those comprising it? In other words, does a district represent its voters? While recent papers conceptualize different measures of representation in voting scenarios~\cite{bachrach2016misrepresentation,johnston2006disproportionality,feix2008majority,gelman2002mathematics,felsenthal2015election}, to our knowledge, 
we are the first to use the concept of margin of victory for re-districting voters 
to achieve better representation. While minimizing margin of victory does not ensure proportional representation of all voter choices in each district, it at least ensures that the voices present are not lost in the crowd. Intuitively, lowering margin of victory across districts would ensure a strong opposition to each majority winner, safeguarding against district monopolies, as well as against diluting voter power across many districts. 

Computing the margin of victory for different voting rules has been studied in~\cite{xia2012computing}, while~\cite{dey2015estimating} and~\cite{blom2018computing} estimate it in real elections. However, to our knowledge, the problem of minimizing margin of victory has not 
attracted much attention. In this paper, we characterize the complexity of this problem for plurality voting, one of the most common voting rules, and give practical algorithms to solve it in real and synthetic datasets. 



\section{Preliminaries}\label{sec:prelim}

\subsection{Voting Setting}

For a positive integer $k$, we denote the set $\{1, 2, \ldots, k\}$ by $[k]$. Let $\AA=\{a_i: i\in[m]\}$ be a set of $m$ {\em alternatives}. A complete order over the set \AA of alternatives is called a {\em preference}. We denote the set of all possible preferences over \AA by $\LL(\AA)$. A tuple $(\suc_i)_{i\in[n]}\in\LL(\AA)^n$ of $n$ preferences is called a {\em profile}. An {\em election} \EE is a tuple $(\suc,\AA)$ where \suc is a profile over a set \AA of alternatives. If not mentioned otherwise, we denote the number of alternatives and the number of preferences by $m$ and $n$ respectively. A map $r:\uplus_{n,|\AA|\in\mathbb{N}^+}\LL(\AA)^n\longrightarrow 2^\AA\setminus\{\emptyset\}$ is called a \emph{voting rule}. In this paper we consider the {\em plurality voting rule} where the set of winners is the set of alternatives who appear at the first position of a highest number of alternatives. We say that a voter votes for an alternative if the voter prefers that alternative most. The number of preferences where an alternative appears at the first place is called her plurality score.

\subsection{Margin of Victory}

Let $r$ be any voting rule. The margin of victory of an election $((\suc_i)_{i\in[n]}, \AA)$ is the minimum number of votes that needs to be changed to change the election outcome. It easily follows that the margin of victory of a plurality election is the ceiling of half the difference between the plurality score of the two highest plurality scores of the alternatives. For ease of notation, we assume that the margin of victory of an empty election (no voters) is $\infty$.

\subsection{Problem Definition}

We now define our basic problem which we call \FD.

\begin{definition}[\FD]\label{def:fd}
Given a set \AA of $m$ alternatives, a set \VV of $n$ voters along with their corresponding preferences, a set of $k$ groups $\HH=\{H_i, i\in[k]\}$ along with the set $\VV_i$ of voters corresponding to each group $H_i$ for $i\in[k]$ such that $(\VV_i)_{i\in[k]}$ forms a partition of \VV, a function $\pi:\VV\longrightarrow 2^\HH\setminus\{\emptyset\}$ denoting the set of groups where each voter can be part of, minimum size $s_{min}$ and maximum size $s_{max}$ of every group, and a target $t$ of maximum margin of victory of any group, compute if there exists a partition $(\VV_i^\pr)_{i\in[k]}$ of \VV into these $k$ groups such that the following holds.
\begin{enumerate}[(i)]
 \item For every $i\in[k]$ and $v\in\VV_i^\pr$, we have $H_i\in\pi(v)$
 \item For every $i\in[k]$, we have $s_{min}\le|\VV_i^\pr|\le s_{max}$
 \item The margin of victory in the group $H_i$ is at most $t$ for every $i\in[k]$
\end{enumerate}
We denote an arbitrary instance of this problem by $(\AA,\VV,k,\HH=(H_i)_{i\in[k]},(\VV_i)_{i\in[k]},\pi,s_{min},s_{max},t)$.
\end{definition}

An important special case of \FD is when every voter can be moved to every district; that is $\pi(v)=\HH$ for every voter $v\in\VV$. We call this problem \FP. We denote an arbitrary instance of \FP by $(\AA,\VV,k,\HH=(H_i)_{i\in[k]},(\VV_i)_{i\in[k]},s_{min},s_{max},t)$.

The \FD problem is generalized to define the \FCD problem where the input also have a graph defined on the set of voters, the given districts are all connected, and we require the new districts to be connected as well. We denote an arbitrary instance of \FCD by $(\AA,\VV,\GG,k,\HH=\left(H_i\right)_{i\in[k]},\left(\VV_i\right)_{i\in[k]},\pi,s_{min},s_{max},t)$. In this paper, we study the above problems only for the plurality voting rule and thus omit specifying it every time.

\vspace{1mm}
\noindent The following observation is immediate from problem definitions itself.

\begin{observation}\label{prop:fp_fd}
	\FP many to one reduces to \FD which many to one reduces to \FCD.
\end{observation}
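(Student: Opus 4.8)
The plan is to handle the two containments separately, each via a direct, polynomial-time many-one reduction.

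For the first reduction \FP${}\to{}$\FD, I would observe that \FP is by definition nothing but the restriction of \FD to instances in which the mobility function $\pi$ sends every voter to the full set of groups. So given an \FP instance $(\AA,\VV,k,\HH,(\VV_i)_{i\in[k]},s_{min},s_{max},t)$, I simply output the \FD instance with the identical data together with $\pi(v)=\HH$ for every $v\in\VV$. Constraint (i) of \Cref{def:fd} then becomes vacuous, while constraints (ii) and (iii) are unchanged, so a partition is feasible for the \FP instance iff it is feasible for the constructed \FD instance; in particular one is a \YES instance iff the other is.

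For the second reduction \FD${}\to{}$\FCD, the idea is to make the added connectivity requirement vacuous by supplying a graph in which every vertex subset induces a connected subgraph. Given an \FD instance, I would build the \FCD instance with the same alternatives, voters, groups, $\pi$, size bounds and target, and let $\GG$ be the complete graph on $\VV$. In a complete graph every nonempty subset induces a complete, hence connected, subgraph, so the initial districts $\VV_i$ are connected as required, and any candidate partition automatically has all of its parts connected. Thus the connectivity constraint neither adds nor removes any feasible partition, and the \FD and \FCD instances have exactly the same feasible partitions, giving the equivalence of \YES answers. Both constructions are computable in polynomial time (the complete graph has $\binom{n}{2}$ edges), so each is a valid many-one reduction, and composing them yields the stated chain.

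There is essentially no hard step here; the only things to check are edge cases. The one worth a sentence is the treatment of empty parts: if some part is empty, its margin of victory is $\infty>t$, so such a partition is infeasible in both problems regardless of any convention about whether the empty graph is connected, and hence the equivalence is unaffected.
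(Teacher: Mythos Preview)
Your proposal is correct and is precisely the natural elaboration of what the paper intends: the paper gives no explicit proof at all, simply stating that the observation ``is immediate from problem definitions itself,'' and your two reductions (the identity embedding with $\pi\equiv\HH$ for \FP${}\to{}$\FD, and adjoining the complete graph on $\VV$ for \FD${}\to{}$\FCD) are exactly the canonical way to make that immediacy explicit. Your handling of the empty-part edge case via $mv(\emptyset)=\infty$ is a nice touch that the paper does not bother to spell out.
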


\section{Results: Intractability}

In this section, we present our hardness results. Our first result shows that \FD is \NPC even with $3$ alternatives. For that we reduce from the well known \SAT problem which is known to be \NPC.

\begin{theorem}\label{thm:fp_npc}
 The \FD problem is \NPC even if we have only $3$ alternatives and there is no constraint on the size of any district.
\end{theorem}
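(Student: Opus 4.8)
The plan is to prove membership in \NP and \NP-hardness separately. Membership is routine: a partition $(\VV_i^\pr)_{i\in[k]}$ is a polynomial-size certificate, and conditions (i)--(iii) of \Cref{def:fd} --- in particular the plurality margin of victory of each group, which is just half the gap between the top two plurality scores --- are checkable in polynomial time. The substance is the hardness reduction, which I would build from \SAT. Given a formula with variables $x_1,\dots,x_N$ and clauses $C_1,\dots,C_M$, I would set the target to $t=0$ (so every group is required to be exactly tied at the top), impose no size constraints ($s_{min}=0$, $s_{max}=n$), and use exactly the three alternatives $\AA=\{p,q,d\}$, where $d$ acts as a fixed ``ceiling'' candidate in every group. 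The only coupling between groups will be through mobile ``token'' voters encoded via $\pi$; all remaining voters are immobile (singleton $\pi$-value) and serve to fix base scores.

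For the \emph{variable gadget}, for each $x_i$ I would create a group $D_i$ and, writing $d_i^+,d_i^-$ for the number of positive and negative occurrences of $x_i$, populate $D_i$ with immobile voters so that $d$ has fixed plurality score $D_0\ge\max(d_i^+,d_i^-)$ while $p,q$ get base scores $D_0-d_i^+$ and $D_0-d_i^-$. For each positive occurrence I add one mobile token ranking $p$ first, and for each negative occurrence one ranking $q$ first; every token may sit in $D_i$ or move to the single clause group of its occurrence. If $L^+$ positive and $L^-$ negative tokens leave $D_i$, the scores become $d=D_0$, $p=D_0-L^+$, $q=D_0-L^-$, so $d$ is always a (weak) winner and the margin of victory equals $\lceil \min(L^+,L^-)/2\rceil$. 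Hence $D_i$ is tied if and only if $L^+=0$ or $L^-=0$: at most one polarity of $x_i$ may release tokens, which is exactly the constraint forbidding a variable from being simultaneously true and false.

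For the \emph{clause gadget}, for each $C_j$ I create a group $E_j$ populated so that $d$ has fixed score $D_1$ while $p,q$ each have base $D_1-1$. The tokens entitled to enter $E_j$ are precisely those of the literals of $C_j$ (a positive literal's token adds to $p$, a negative literal's to $q$). With no incoming token, $d$ wins with gap $1$ and margin $1>t$; as soon as one token enters, its candidate reaches $D_1$ and ties $d$, giving margin $0$. Thus $E_j$ is tied only if at least one literal token arrives, and can be tied by a single such token. Correctness then follows in both directions: from a satisfying assignment, for each clause I route one true literal's token into $E_j$ and leave all other tokens home, so each $D_i$ has $\min(L^+,L^-)=0$ and each $E_j$ receives a token, making every group tied; conversely, a feasible (all-tied) partition lets me read off a consistent assignment ($x_i$ true iff some positive token left $D_i$, false iff some negative token left, arbitrary otherwise), and tiedness of each $E_j$ forces some literal token to have entered, so that literal is true and the clause is satisfied.

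The delicate point, and the part I expect to require the most care, is enforcing variable consistency and clause satisfaction \emph{simultaneously} with only three alternatives and no size constraints, since the sole coupling across groups is the mobile tokens. The device that makes everything fit is the shared ceiling candidate $d$: in a variable group it collapses the margin to $\lceil\min(L^+,L^-)/2\rceil$, which vanishes exactly on the polarity-consistent configurations, and in a clause group it makes one incoming token both necessary and sufficient for a tie. The main obstacle in writing up soundness is ruling out every ``cheating'' configuration --- a variable leaking tokens of both polarities, or a clause tied with no token present --- by verifying that none of them can meet all margin-of-victory constraints at once; the two gadget computations above are precisely what close that gap.
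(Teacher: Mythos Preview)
Your proposal is correct, and while both you and the paper reduce from \SAT with three alternatives, no size constraints, and mobility encoded via $\pi$, the gadgetry is genuinely different. The paper uses target $t=2$ and builds \emph{three} groups per variable --- two literal groups $\XX_i,\bar{\XX_i}$ and a selector group $\ZZ_i$ --- where a single mobile $c$-voter from $\ZZ_i$ must land in exactly one literal group, and only a literal group that has received this $c$-voter can afford to release $b$-voters to clause groups $\YY_j$; consistency is thus enforced indirectly through the selector. You instead use target $t=0$ and a \emph{single} group $D_i$ per variable, with both polarities' tokens living there; the tie requirement forces $\min(L^+,L^-)=0$, so at most one polarity may leave, which enforces consistency directly. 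Your clause gadget is also tighter: one incoming token is both necessary and sufficient for a tie, whereas the paper's $\YY_j$ needs a $b$-voter to close a gap of~$3$ down to~$2$.

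What your route buys is compactness and a cleaner soundness argument: with $t=0$ every feasible group is exactly tied at the top, so the case analysis is short and there are no borderline margin computations to track. What the paper's route buys is that it avoids relying on margin~$0$ being attainable (some readers find the ``tie at top gives margin $0$'' convention slightly delicate) and it illustrates a two-stage mechanism where a selector group controls downstream behaviour. Both are valid hardness proofs for the same restricted setting; yours is arguably the more economical of the two.
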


\begin{proof}
 \FD clearly belongs to \NP. To prove \NP-hardness, we reduce from the \SAT problem. Let $\left(\XX=\left\{x_i: i\in[n]\right\}, \CC=\left\{C_j: j\in[m]\right\}\right)$ be an arbitrary instance of \SAT. Let us consider the following instance $(\AA,\VV,k,\HH=,(\VV_i)_{i\in[k]},\pi,s_{min}=0,s_{max}=\infty,t=2)$ of \FD.
 \begin{align*}
  \AA &= \{a,b,c\}\\
  \HH &= \{ \XX_i, \bar{\XX_i}, \ZZ_i: i\in[n] \} \cup \{\YY_j: j\in[m]\}\\
  \forall i\in[n],& \text{ Votes in } \XX_i : m+2 \text{ votes for } a\\
  & m \text{ votes for } b, m-1 \text{ votes for } c\\
  \forall i\in[n],& \text{ Votes in } \bar{\XX_i} : m+2 \text{ votes for } a\\
  & m \text{ votes for } b, m-1 \text{ votes for } c\\
  \forall i\in[n],& \text{ Votes in } \ZZ_i : m+2 \text{ votes for } a\\
  & m+1 \text{ votes for } c\\
  \forall j\in[m],& \text{ Votes in } \YY_j : m+3 \text{ votes for } a\\
  & m \text{ votes for } b\\
  t &= 2
 \end{align*}
 
 Let $f$ be a function defined on the set of literals as $f(x_i)=\XX_i$ and $f(\bar{x}_i)=\bar{\XX_i}$ for every $i\in[n]$. We now describe the $\pi$ function. For $i\in[n]$, no voter in $\ZZ_i$ can move to any other district except one voter who votes for the alternative $c$ and she can move to $\XX_i$ and $\bar{\XX_i}$. For $i\in[n]$, no voter voting for the alternatives $a$ or $c$ in both $\XX_i$ and $\bar{\XX_i}$ leave their current districts; a voter in $\XX_i$ ($\bar{\XX_i}$ respectively) who votes for the alternative $b$ can move to the district $\YY_j$ for some $j\in[m]$ if the variable $x_i$ ($\bar{x_i}$ respectively) appears in the clause $C_j$. Finally no voter in the district $\YY_j, j\in[m]$ leave their current district. This finishes the description of $\pi$ and the description of the instance of \FD. We claim that the two instances are equivalent.
 
 In one direction, let us assume that the \SAT instance is a \YES instance. Let $g:\XX\longrightarrow\{0,1\}$ be a satisfying assignment for the \SAT instance. Let us consider the following movement of the voters -- for $i\in[n],$ if $g(x_i)=1$, then one voter in the district $\ZZ_i$ which votes for the alternative $c$ moves to the district $\XX_i$; otherwise one voter in the district $\ZZ_i$ which votes for the alternative $c$ moves to the district $\bar{\XX_i}$. For $j\in[m]$, let the clause $C_j$ be $ \el_1 \vee \el_2 \vee \el_3$ and $g$ sets the literal $\el_1$ to be $1$ (we can assume this without loss of generality). Then one voter from the district $f(\el_1)$ who votes for $b$ moves to the district $\YY_j$. Since the assignment $g$ satisfies all the clauses, the margin of victory in the district $\YY_j$ is $2$ for every $j\in[m]$. For $i\in[n]$, if $g(x_i)=0$ ($g(x_i)=1$ respectively), then the margin of victory in the district $\bar{\XX_i}$ ($\XX_i$ respectively) is $2$ since it receives a voter voting for the alternative $c$. The rest of the districts (for $i\in[n], \XX_i$ if $g(x_i)=0$ and $\bar{\XX_i}$ if $g(x_1)=1$) remain same and their margin of victory remains to be $2$. Hence the \FP instance is also a \YES instance.
 
 In the other direction, let us assume that the \FP instance is a \YES instance. We define an assignment $g:\XX\longrightarrow\{0,1\}$ to the variables in the \SAT instance as follows. For $i\in[n]$, if a voter in the district $\ZZ_i$ who votes for $c$ moves to $\XX_i$, then we define $g(x_i)=1$; otherwise we define $g(x_i)=0$. We claim that $g$ is a satisfying assignment for the \SAT instance. Suppose not, then there exists a clause $C_j = \el_1 \vee \el_2 \vee \el_3$ for some $j\in[m]$ which $g$ does not satisfy. To make the margin of victory of the district $\YY_j$ at most $2$, one voter who votes for $b$ must move into $\YY_j$ either from the district $f(\el_1)$ or from the district $f(\el_2)$ or from the district $f(\el_3)$. However, since $g$ does not set any of $\el_1, \el_2,$ or $\el_3$ to $1$, none of these districts receive any voter who votes for the alternative $c$. Consequently, none of the district can send a voter who votes for the alternative $b$ to the district $\YY_j$ since otherwise the margin of victory of district which sends a voter who votes for the alternative $b$ becomes at least $3$ contradicting our assumption that the \FP instance is a \YES instance. Hence the \SAT instance is a \YES instance.
\end{proof}

Due to 
Observation 1, it follows immediately from Theorem 1 that the \FCD problem for plurality voting rule is also \NPC. We next show that \FCD is \NPC even if we simultaneously have $2$ alternatives and $2$ districts. For that, we reduce from \DCP \longversion{which is defined as follows.}\shortversion{which is known to be \NPC~\cite{DBLP:journals/tcs/HofPW09}. We omit the proof of \Cref{thm:fd_npc} due to space restriction. It can be found in the supplementary material.}
\longversion{
\begin{definition}[\DCP]
 Given a connected graph $\GG=(\VV,\EE)$ and two disjoint nonempty sets $\ZZ_1,\ZZ_2\subset\VV$, compute if there exists a partition $(\VV_1,\VV_2)$ of \VV such that $\ZZ_1\subseteq\VV_1, \ZZ_2\subseteq\VV_2, \GG[\VV_1]$ and $\GG[\VV_2]$ are both connected. We denote an arbitrary instance of \DCP by $(\GG,\ZZ_1,\ZZ_2)$.
\end{definition}

It is already known that the \DCP problem is \NPC~\cite[Theorem 1]{DBLP:journals/tcs/HofPW09}. However the proof of Theorem 1 in \cite{DBLP:journals/tcs/HofPW09} can be imitated as a reduction from the version of \SAT where every literal appears in exactly two clauses; this version is also known to be \NPC~\cite{ECCC-TR03-022}. This proves the following.

\begin{proposition}
 The \DCP problem is \NPC even if the maximum degree of the input graph is $5$.
\end{proposition}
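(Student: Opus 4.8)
The plan is to first observe that \DCP lies in \NP: a candidate partition $(\VV_1,\VV_2)$ is a polynomial-size certificate, and checking $\ZZ_1\subseteq\VV_1$, $\ZZ_2\subseteq\VV_2$, and the connectivity of $\GG[\VV_1]$ and $\GG[\VV_2]$ takes polynomial time. The substance is the hardness. Following the hint, I would reduce from \TSAT, the restriction of \SAT in which every literal occurs in exactly two clauses, which is \NPC~\cite{ECCC-TR03-022}. The goal is to reproduce the gadget-based construction underlying \cite[Theorem 1]{DBLP:journals/tcs/HofPW09}, but to feed it a bounded-occurrence instance, so that the resulting graph automatically has bounded degree.

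For the construction, given a \TSAT instance with variables $\{x_i\}$ and clauses $\{C_j\}$, I would build $\GG$ together with two terminal sets $\ZZ_1,\ZZ_2$ so that every vertex of $\ZZ_1$ is forced into $\VV_1$ and every vertex of $\ZZ_2$ into $\VV_2$, leaving the placement of the ``literal'' vertices as the only freedom. The key design principle is that both terminal sets are realised as \emph{backbone paths} rather than single hub vertices: a path of bounded-degree terminals in $\ZZ_1$ (respectively $\ZZ_2$), to which the gadgets attach locally, so that the two required connected subgraphs are assembled by stringing gadgets along the backbones instead of funnelling every edge through one vertex. For each variable $x_i$ I attach a small choice gadget offering exactly two connected ways to link its literal vertices to the backbones, encoding $x_i=1$ versus $x_i=0$. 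For each clause $C_j=\el_1\vee\el_2\vee\el_3$ I add a clause vertex that can be made part of the correct connected side precisely when at least one of its three literals is placed consistently with a satisfying assignment.

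For correctness, in the forward direction a satisfying assignment dictates, for each variable gadget, which of its two connected configurations to use; routing each clause vertex through a satisfied literal then certifies that both $\GG[\VV_1]$ and $\GG[\VV_2]$ are connected and respect $\ZZ_1,\ZZ_2$. In the reverse direction, any valid partition must, by the forcing built into each gadget, pick one of the two configurations per variable, thereby defining an assignment; the connectivity of the side containing the clause vertices then forces every clause to have a literal on the appropriate side, so the assignment satisfies the formula.

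The main obstacle is enforcing the maximum-degree-$5$ bound simultaneously with the logical constraints. Each vertex's neighbours come from its position on a backbone path (two backbone neighbours), its local gadget wiring, and its clause incidences. Because we start from \TSAT, each literal lies in only two clauses, so each literal vertex has a constant number of clause-neighbours; combining this with the backbone-path realisation of the terminals keeps the degree from growing with the formula size. I would then verify, gadget by gadget, that no vertex exceeds degree $5$. This bookkeeping, rather than the logic of the reduction, is where the real care is needed, and it is exactly what forces both the backbone (rather than hub) design and the choice of a bounded-occurrence source problem.
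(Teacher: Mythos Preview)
Your proposal is correct and matches the paper's approach exactly: the paper's proof is itself just the one-line observation that the reduction of \cite[Theorem~1]{DBLP:journals/tcs/HofPW09} can be rerun starting from the bounded-occurrence variant \TSAT of~\cite{ECCC-TR03-022}, which immediately caps the degree at~$5$. Your write-up in fact supplies more detail (backbone paths, per-gadget degree bookkeeping) than the paper, which leaves all of that implicit.
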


}

\begin{theorem}\label{thm:fd_npc}\shortversion{[$\star$]}
 The \FCD problem is \NPC even if we have only $2$ alternatives, $2$ districts, the maximum degree of any vertex in the underlying graph is $5$, and we do not have any constraint on the size of districts.
\end{theorem}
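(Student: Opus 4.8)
The plan is to reduce from \DCP, which by the preceding proposition is \NPC even on graphs of maximum degree $5$; membership of \FCD in \NP is clear, since a candidate partition $(\VV_1^\pr,\VV_2^\pr)$ can be verified in polynomial time against the $\pi$-, size-, connectivity-, and margin-constraints. Given a \DCP instance $(\GG=(\VV,\EE),\ZZ_1,\ZZ_2)$ with maximum degree at most $5$, I would build an \FCD instance on the \emph{same} graph $\GG$ (so the degree bound is preserved automatically), with $\AA=\{a,b\}$, $k=2$ and $\HH=\{H_1,H_2\}$, no size bounds ($s_{min}=0$, $s_{max}=\infty$), and every voter voting for $a$. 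I would pin $\ZZ_1$ to $H_1$ and $\ZZ_2$ to $H_2$ through the mobility function, namely $\pi(v)=\{H_1\}$ for $v\in\ZZ_1$, $\pi(v)=\{H_2\}$ for $v\in\ZZ_2$, and $\pi(v)=\{H_1,H_2\}$ otherwise. This is exactly the device that turns the otherwise trivial ``find some connected $2$-partition'' question into the constrained partition solved by \DCP, since without the pinning one could always split off a single non-cut vertex.

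The second ingredient is to neutralise the margin-of-victory objective so that all the hardness is carried by connectivity. As every voter votes for $a$, a district holding $p\ge 1$ voters has plurality scores $p$ and $0$, hence margin of victory $\lceil p/2\rceil$. Choosing the target $t=\lceil n/2\rceil$ makes requirement (iii) hold for \emph{every} nonempty district (because $p\le n$), while an empty district has margin $\infty>t$; thus (iii) reduces to merely demanding that both $H_1$ and $H_2$ be nonempty. I would also note that the initial partition $(\VV_1,\VV_2)$ has no bearing on the yes/no answer (the existence question of \Cref{def:fd} refers only to the new partition), so I can pick any connected one — e.g.\ move a single non-cut vertex of the connected graph $\GG$ into $\VV_2$ — purely to make the instance well-formed.

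With this set-up I would then verify the equivalence in both directions. Any solution $(\VV_1^\pr,\VV_2^\pr)$ of the \FCD instance respects $\pi$ (so $\ZZ_1\subseteq\VV_1^\pr$ and $\ZZ_2\subseteq\VV_2^\pr$), induces connected districts, and has both parts nonempty (forced by $t<\infty$ together with $\ZZ_1,\ZZ_2\ne\emptyset$); this is precisely a \DCP solution. Conversely, any \DCP partition $(\VV_1,\VV_2)$ satisfies the $\pi$-constraints and the connectivity constraint, has both parts nonempty, and automatically meets the margin target, so it is a valid \FCD solution. Hence the two instances are equivalent, which together with \NP-membership establishes \NP-completeness under the stated restrictions.

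I expect the ``main obstacle'' to be conceptual rather than computational: the decoupling in the second paragraph. One must be careful that making the margin constraint vacuous does not accidentally trivialise the instance (a district being empty is ruled out by the $\infty$-convention and by $\ZZ_1,\ZZ_2\ne\emptyset$), and that enforcing the $\ZZ_1,\ZZ_2$ placement through $\pi$ rather than through extra edges keeps the maximum degree at $5$. Were one instead to insist on a small fixed target $t$, the profile would have to stay balanced inside every admissible district, which seems to demand balancing gadgets that inflate vertex degrees; avoiding exactly this is why I keep $\GG$ untouched and push all the numerical slack into the choice of $t$.
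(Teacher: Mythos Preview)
Your reduction is correct, but it follows a genuinely different route from the paper. Both arguments reduce from \DCP on graphs of maximum degree~$5$; the divergence is in how the margin-of-victory constraint is handled. The paper keeps the target tight ($t=1$) and engineers the profile so that any valid connected $2$-partition separating $\ZZ_1$ from $\ZZ_2$ automatically balances the two alternatives: each vertex $u\notin\ZZ_2$ is paired with a pendant twin voting for the opposite alternative, and a path of $|\ZZ_2|+1$ dummy $y$-voters is attached at $\ZZ_2$ to offset the $|\ZZ_2|$ unpaired $x$-votes there. Connectivity then forces each twin pair to stay together, so both districts end up within one vote of a tie. You instead render the margin constraint vacuous by having everyone vote for $a$ and choosing $t=\lceil n/2\rceil$, so the only surviving requirements are exactly those of \DCP (connectivity plus the $\pi$-pinning of $\ZZ_1,\ZZ_2$).

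What each buys: your construction is shorter, uses the input graph \emph{unchanged}, and therefore inherits the degree bound with no arithmetic about pendants or dummy paths; it also sidesteps the delicate claim that the gadgetry keeps the degree at~$5$. The paper's construction, on the other hand, proves something slightly stronger than the stated theorem, namely hardness already at target $t=1$; your argument needs $t$ to grow with $n$. Since the theorem as stated places no restriction on $t$, your proof suffices, but it is worth flagging that the tight-$t$ version does not follow from it.
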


\longversion{
\begin{proof}
 The \FCD problem is clearly in \NP. To prove \NP-hardness, we reduce from \DCP to \FCD. Let $(\GG^\pr=(\UU,\EE^\pr),\ZZ_1,\ZZ_2)$ be an arbitrary instance of \FCD. Without loss of generality, let us assume that the degree of every vertex in $\ZZ_2$ is $2$; $z_2$ be any arbitrary (fixed) vertex of $\ZZ_2$. Let us consider the following instance $(\AA,\VV,\GG=(\VV,\EE),k=2,\HH=(H_i)_{i\in[2]},(\VV_i)_{i\in[2]},\pi,s_{min}=0,s_{max}=\infty,t=1)$ of \FCD.
 \begin{align*}
  \AA &= \{x,y\}\\
  \VV &= \{v_z: z\in\ZZ_2\} \\
  &\cup \{v_u, w_u: u\in\VV\setminus\ZZ_2\}\\
        &\cup\DD, \DD=\left\{d_i: i\in\left[\left|\ZZ_2\right|+1\right]\right\}\\
  \EE &= \{\{v_a,v_b\}: \{a,b\}\in\EE^\pr\}\\
  	& \cup \{\{v_u,w_u\}:u\in\VV[\GG^\pr]\setminus\ZZ_2\}\\
        &\cup\{\{d_i,d_j\}:i,j\in\left[\left|\ZZ_2\right|+1\right],j=i+1\}\\
       & \cup \{\{z_2,d_1\}\}\\
  \HH_2 &= \left\{d_i: i\in\left[\left|\ZZ_2\right|+1\right]\right\}\\
  \HH_1 &= \VV\setminus\HH_2\\
  &\text{Vote of } v_u, u\in\VV: x\suc y\\
  &\text{Vote of } w_u, u\in\VV\setminus\ZZ_2: y\suc x\\
  &\text{Vote of } d_i, i\in\left[\left|\ZZ_2\right|+1\right]: y\suc x\\
  &\pi(v_z) = \{\HH_1\}, z\in\ZZ_1\\
  &\pi(d_i) = \{\HH_2\}, i\in\left[\left|\ZZ_2\right|+1\right]\\
  &\pi(v) = \{\HH_1, \HH_2\} \text{ for every other vertex } v
 \end{align*}
 
 This finishes the description of the instance of \FCD. We now claim that the \FCD instance is equivalent to the \DCP instance.
 
 In one direction, let us assume that the \DCP instance is a \YES instance. Let $(\VV_1,\VV_2)$ be a partition of \UU such that $\ZZ_1\subseteq\VV_1, \ZZ_2\subseteq\VV_2, \GG^\pr[\VV_1]$ and $\GG^\pr[\VV_2]$ are both connected. We consider the following new partition of the voters.
 \begin{align*}
  \text{Voters of } \HH_1: \{v_u, w_u: u\in\VV_1\}; \text{ voters of } \HH_2: \text{ others}
 \end{align*}
 
 Since $\GG^\pr[\VV_1]$ is connected, it follows that $\GG[\HH_1]$ is also connected. Similarly, since $\GG^\pr[\VV_2]$ is connected, $\GG[\DD]$ is connected, and $\{z_2,d_1\}\in\EE[\GG]$, it follows that $\GG[\HH_2]$ is also connected. In $\HH_1$, both the alternatives $x$ and $y$ receive the same number of votes and thus the margin of victory of $\HH_1$ is $1$. In $\HH_2$, the alternatives $x$ receives $1$ less vote than the alternatives $y$ and thus the margin of victory of $\HH_2$ is $1$. Thus the \FCD instance is also a \YES instance.
 
 In the other direction, let us assume that there exists a valid partition $(\HH_1^\pr,\HH_2^\pr)$ of the voters such that both $\GG[\HH_1^\pr]$ and $\GG[\HH_2^\pr]$ are connected and the margin of victory of both $\HH_1^\pr$ and $\HH_2^\pr$ are $1$. Let us define $\VV_1=\{u\in\VV[\GG^\pr]: v_u\in\HH_1^\pr\}$ and $\VV_2=\VV[\GG^\pr]\setminus\VV_1$. It follows from the function $\pi$ that we have $\ZZ_1\subseteq\VV_1^\pr, \ZZ_2\subseteq\VV_2^\pr$. Also $\GG^\pr[\VV_1^\pr]$ is connected since the voters in $\HH_1^\pr$ are connected. We also have $\GG^\pr[\VV_2^\pr]$ is connected since the voters in $\HH_2^\pr$ are connected, the vertices in \DD forms a path, and there exists a pendant vertex in \DD. We also have $\ZZ_2\in\VV_2^\prr$ since the voters in $\{v_u:\in\ZZ_2\}$ belongs to $\HH_2$; otherwise the margin of victory of $\HH_2$ would be strictly more than $1$. Hence $(\VV_1^\pr, \VV_2^\pr)$ is a solution of the \DCP instance and thus the instance is a \YES instance.
\end{proof}
}

\section{Results: Polynomial Time Algorithms}

We now present out polynomial time algorithms. We first show that \FP is polynomial time solvable.

\begin{theorem}\label{thm:fp_poly_alt}
 The \FP problem is polynomial time solvable if the number of alternatives is a constant.
\end{theorem}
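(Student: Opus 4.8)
The plan is to exploit the fact that under the plurality rule only the top-ranked alternative of each voter matters, so any two voters with the same favourite alternative are interchangeable. First I would discard all preference information except each voter's most preferred alternative and summarise the instance by the vector of type counts $(n_a)_{a\in\AA}$, where $n_a$ is the number of voters whose favourite is $a$; note $\sum_{a\in\AA}n_a=n$. Because $\pi(v)=\HH$ for every voter in an \FP instance, the groups are completely symmetric and the only information about a group that affects feasibility is how many voters of each type it contains.

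Accordingly, call a vector $c=(c_a)_{a\in\AA}\in\NB^m$ a \emph{configuration}, representing a group holding $c_a$ voters of type $a$. Its size is $\sum_a c_a$ and, since the plurality score of $a$ in such a group is exactly $c_a$, its margin of victory is $\lceil(c_{(1)}-c_{(2)})/2\rceil$, where $c_{(1)}\ge c_{(2)}$ are the two largest coordinates of $c$ (and $\infty$ if the group is empty). Call $c$ \emph{valid} if $s_{min}\le\sum_a c_a\le s_{max}$ and its margin of victory is at most $t$; validity is checkable in $O(m\log m)$ time. Since each coordinate lies in $\{0,1,\dots,n\}$, there are at most $(n+1)^m$ configurations, which is polynomial because $m$ is constant, so I would enumerate all of them and retain the valid ones.

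It then suffices to decide whether $(n_a)_{a\in\AA}$ can be written as a sum of exactly $k$ valid configurations. I would solve this by a reachability dynamic program over a table $D[j][r]$, indexed by $j\in\{0,\dots,k\}$ and a count vector $r\in\prod_a\{0,\dots,n_a\}$, where $D[j][r]$ is true iff $r$ is the sum of $j$ valid configurations. The base case is $D[0][\mathbf 0]=\text{true}$, and the transition marks $D[j+1][r+c]$ true whenever $D[j][r]$ is true and $c$ is a valid configuration with $r+c\le(n_a)_a$ coordinatewise. The instance is a \YES instance iff $D[k][(n_a)_a]$ is true, and an actual partition is recovered by backtracking through the chosen configurations and filling each group's per-type quota with arbitrary voters of the corresponding type. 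The table has $O(k\,(n+1)^m)$ entries and each is updated against at most $(n+1)^m$ configurations, giving total running time $O(k\,(n+1)^{2m})$, polynomial for constant $m$.

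The only real content is the reduction to type counts, after which the algorithm is routine; I expect the main thing to get right is the two-directional correctness argument, namely that a voter-level solution induces a decomposition of $(n_a)_a$ into $k$ valid configurations and, conversely, that any such decomposition is realisable by a concrete voter assignment (immediate since voters of a common type are interchangeable). I would also note that, because an empty group has margin of victory $\infty$, for finite $t$ every valid configuration is automatically non-empty, so the decomposition respects the group lower bound without extra bookkeeping. The constant-$m$ hypothesis is used exactly once, but crucially: it is what keeps both the number of configurations and the dynamic-programming state space polynomial.
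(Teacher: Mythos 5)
Your proposal is correct and follows essentially the same route as the paper: both exploit that only voters' top choices matter under plurality and run a dynamic program indexed by a per-alternative count vector and the number of districts, peeling off one district's configuration at a time, with running time $n^{\OO(m)}$. The only cosmetic difference is that the paper's table stores the minimum achievable maximum margin of victory while yours stores a Boolean for the fixed threshold $t$; these are interchangeable for the decision problem.
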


\begin{proof}
 An arbitrary instance of \FP be $(\AA,\VV,k,\HH=\left(H_i\right)_{i\in[k]},\left(\VV_i\right)_{i\in[k]},s_{min},s_{max},t)$. For an alternative $a\in\AA$, let $n_a$ be the number of vote that $a$ receives. We present a dynamic programming based algorithm for the \FP problem. The dynamic programming table $\TT\left(\left(i_a\in\{0,1,\ldots,n_a\}\right)_{a\in\AA},\el\in[k]\right)$ is defined as follows -- $\TT\left(\left(i_a\right)_{a\in\AA},\el\right)$ is the minimum integer $\lambda$ such that the voting profile consisting $i_a$ number of voters voting for the alternative $a$ can be partitioned into $\el$ districts such that the margin of victory of any district is at most $\lambda$. For every $i_a\in\{0,1,\ldots,n_a\}, a\in\AA$, we initialize $\TT\left(\left(i_a\right)_{a\in\AA},1\right)$ to be the margin of victory of the voting profile which consists of $i_a$ number of voters voting for the alternative $a$ for $a\in\AA$. We update the entries in the table \TT as follows for every $\el\in\{2,3,\ldots,k\}$.
 \begin{align*}
 	&\TT\left(\left(i_a\right)_{a\in\AA},\el\right)\\
 	&= \min_{\substack{(i_a^\pr)_{a\in\AA}, i_a^\pr\ge 0\;\forall a\in\AA\\s_{min}\le\sum_{a\in\AA} i_a^\pr\le s_{max}}} \max 
 	\left\{\!\begin{aligned}
 	          & mv\left(\left(i_a^\pr\right)_{a\in\AA}\right),\\
 	          & \TT\left(\left(i_a - i_a^\pr\right)_{a\in\AA}, \el-1\right)
 	         \end{aligned}\right\}
 \end{align*}
 
 In the above expression $mv\left(\left(i_a^\pr\right)_{a\in\AA}\right)$ denotes the plurality margin of victory of the profile which consists of $i_a^\pr$ number of voters voting for the alternative $a$ for $a\in\AA$. Updating each entry of the table takes $\OO\left(\prod_{a\in\AA}n_a\right)\text{poly}(m,n)$ time. The table has $k\prod_{a\in\AA}n_a$ entries. Hence the running time of our algorithm is $\OO\left(\prod_{a\in\AA}n_a^2\right)\text{poly}(m,n) = \OO\left(n^{2m}\text{poly}(m,n)\right)$ which is $n^{\OO(1)}$ when we have $m=\OO(1)$.
\end{proof}

We next present a polynomial time algorithm for \FD if we have a constant number of districts.

\begin{theorem}\label{thm:fp_poly_grp}
 The \FD problem is polynomial time solvable if the number of districts is a constant.
\end{theorem}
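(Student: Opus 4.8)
The plan is to exploit the constancy of $k$ to reduce the problem, after only a polynomial amount of guessing, to an integer feasible-flow computation. The starting observation is the characterization of the target: for plurality, the margin of victory of a group $H_i$ is at most $t$ if and only if $H_i$ is nonempty and the gap between its two highest plurality scores is at most $2t$. Equivalently, if $w_i$ is a plurality winner of $H_i$ with score $s_i$, then every alternative gets at most $s_i$ votes, and either $s_i\le 2t$ or some alternative other than $w_i$ gets at least $s_i-2t$ votes. This suggests guessing, for each of the $k$ groups, (a) the identity of a plurality winner $w_i\in\AA$, (b) its exact score $s_i\in\{1,\ldots,n\}$ (we may assume groups are nonempty, since an empty group has margin $\infty$), and (c) a ``runner-up'' $r_i\in\AA\setminus\{w_i\}$ certifying the small gap. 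Because $k$ is a constant, there are only $O((m^2n)^k)=\mathrm{poly}(m,n)$ such guesses; I would iterate over all of them and answer \YES iff at least one is realizable. (The degenerate case $m=1$ is trivial and handled separately.)

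Fix a guess. Realizability becomes: can \VV be partitioned, respecting $\pi$ and the size bounds, so that the per-group counts $c_{i,a}$ (the number of voters in $H_i$ voting for $a$) satisfy $c_{i,w_i}=s_i$, $c_{i,a}\le s_i$ for all $a$, $c_{i,r_i}\ge\max\{0,s_i-2t\}$, and $s_{min}\le\sum_a c_{i,a}\le s_{max}$? Since a voter's vote is unchanged by the partition, two voters voting for the same alternative with the same value of $\pi$ are interchangeable, so I would bucket the voters into the $O(m2^k)$ types indexed by a pair $(a,S)$ with $S\in 2^\HH\setminus\{\emptyset\}$. I model realizability as a feasible flow with integer lower and upper bounds: a source emits exactly the bucket size out of each type-node $(a,S)$; this node routes flow only into the ``slot'' nodes $(H_i,a)$ with $H_i\in S$ (which enforces both that $a$-voters remain $a$-voters and that voters land only in allowed groups); the edge from slot $(H_i,a)$ to the group-node $H_i$ carries $c_{i,a}$ and is given bounds $[s_i,s_i]$ if $a=w_i$, bounds $[\max\{0,s_i-2t\},s_i]$ if $a=r_i$, and $[0,s_i]$ otherwise; finally each group-node sends its total to the sink along an edge with bounds $[s_{min},s_{max}]$. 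Flow conservation at the group-node equates $\sum_a c_{i,a}$ with the size of $H_i$, so all constraints are encoded. A feasible circulation with lower bounds can be found in polynomial time by the standard reduction to maximum flow, and since all bounds are integers one exists iff an integral one does. An integral feasible flow is exactly a valid partition for this guess; conversely, any valid partition realizes the guess obtained by reading off each group's winner, its score, and a closest competitor, so the reduction is correct in both directions.

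The only genuine obstacle is the modeling step, because the two constraints that are truly relational---``$w_i$ attains the maximum score'' and ``the second-highest score is within $2t$ of the top''---compare two \emph{unknown} counts and hence are not directly expressible as fixed edge capacities. The device that removes this obstacle is guessing the exact winning score $s_i$, which is affordable precisely because $k$ is constant: it turns ``$c_{i,a}\le c_{i,w_i}$ for all $a$'' into the fixed upper bounds $c_{i,a}\le s_i$, and ``second-highest $\ge s_i-2t$'' into the fixed lower bound $c_{i,r_i}\ge s_i-2t$. Once this linearization is in place, the remainder is a routine integral flow-feasibility argument, and the overall running time is $\mathrm{poly}(m,n)$ times the $\mathrm{poly}(m,n)$ number of guesses, which is polynomial for constant $k$.
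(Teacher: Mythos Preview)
Your proof is correct and takes essentially the same approach as the paper: for each of the constantly many districts guess a winner, a runner-up, and the winner's exact plurality score, then encode the resulting per-district count and size constraints as a feasible integer flow with lower bounds. Your bucketing of voters by $(a,\pi(v))$ is a cosmetic optimization over the paper's one-node-per-voter network, and your runner-up lower bound of $s_i-2t$ (rather than $s_i-t$) is in fact the threshold consistent with the paper's own definition of margin of victory as $\lceil (s_1-s_2)/2\rceil$.
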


\begin{proof}
 An arbitrary instance of \FD be $(\AA,\VV,k,\HH=\left(H_i\right)_{i\in[k]},\left(\VV_i\right)_{i\in[k]},\pi,s_{min},s_{max},t)$. We guess a winner and a runner up of every district -- there are ${m\choose 2}^k=\OO(m^{2k})$ possibilities. We also guess the plurality score of a winner of every district -- there are $\OO(n^k)$ possibilities. Given a guess of a winner, its plurality score, and a runner up alternative of every district, we reduce the problem of computing if there exists a partition of \VV (respecting the given guesses) which achieves the maximum margin of victory of at most $t$ to a $s^\pr$ to $t^\pr$ flow problem (with demand on edges) instance $\left(\GG=\left(\UU,\EE\right),c:\EE\longrightarrow\RB, d:\EE\longrightarrow\RB\right)$ as follows.
 \begin{align*}
  \UU &= \UU_L \cup \UU_M \cup \UU_R \cup \{s^\pr,t^\pr\} \text{ where }\\
  \UU_L &= \{u_v: v\in\VV\}\\
  \UU_M &= \{u_{a,i}: a\in\AA,i\in[k]\}\\
  \UU_R &= \{u_i: i\in[k]\}\\
  \EE &= \left\{(s^\pr, u_v): v\in\VV\right\}\\
  & \cup \left\{(u_v, u_{a,i}): v\in\VV,i\in[k],\right.\\
  &\left. v \text{'s vote is } a\suc\cdots, \HH_i\in\pi(v)\right\}\\
  & \cup \left\{(u_{a,i},u_i):a\in\AA,i\in[k]\right\}\\
  & \cup \left\{(u_i,t^\pr):i\in[k]\right\}\\
 \end{align*}
 The capacity $c$ of every edge from $s^\pr$ to $\UU_L$ and from $\UU_L$ to $\UU_M$ is $1$. For every $i\in[k]$, if $x$ and $y$ are respectively the (guessed) winner and runner up of $\HH_i$ and $n_i$ is the (guessed) plurality score of a winner in $\HH_i$, then we define the capacity and demand of the edge $(u_{x,i},u_i)$ to be $n_i$ and the capacity and demand of the edge $(u_{y,i},u_i)$ to be $(n_i-t)$; if $(n_i-t)$ is not positive, then we discard the current guess. We define the capacity of the edge $(u_{z,i},u_i)$ to be $n_i$ for every alternative $z$ who is not the guessed winner in $\HH_i$ for $i\in[k]$. Finally we define the capacity and demand of every edge from $U_R$ to $t^\pr$ to be $s_{max}$ and $s_{min}$ respectively. We claim that the given \FD instance is a \YES instance if and only if there exists a guess of a winner, its plurality score, and a runner up alternative of every district whose corresponding flow instance has an $s^\pr$ to $t^\pr$ flow of value $n$.
 
 In one direction, suppose the \FD instance is a \YES instance. Let $x_i$ and $y_i$ be a winner and a runner up respectively in $\HH_i$ and $n_i$ be the plurality score of a winner in $\HH_i$ for $i\in[k]$. For the guess corresponding to the solution of \FD, we send $1$ unit of flow from $s^\pr$ to $u_v, v\in\VV$, from $u_v$ to $u_{a,i}$ if the voter $v$ belongs to $\HH_i$ in the solution and $v$ votes for $a$. Since every vertex in $\UU_M$ has exactly one outgoing neighbor, all the incoming flow at every vertex in $\UU_M$ move to their corresponding neighbor in $\UU_R$. Similarly, the outgoing neighbor of every vertex in $\UU_R$ is $t^\pr$, all the incoming flow at every vertex in $\UU_R$ move to $t^\pr$. Obviously the flow conservation property is satisfied at every vertex. Also capacity and demand constraints are also satisfied at every edge since the guess corresponds to a solution of the \FD instance. Finally since the total outgoing flow at $s^\pr$ is $n$, the total flow value is also $n$.
 
 In the other direction, assuming $x_i$ and $y_i$ being a guessed winner and a runner up respectively in $\HH_i$ and $n_i$ being the plurality score of a winner in $\HH_i$ for $i\in[k]$, the corresponding flow network has a flow value of $n$, we claim that the \FD instance is a \YES instance. We can assume without loss of generality that the flow value on every edge in a maximum flow is an integer since the demand and capacity of every edge are integers. We define a voter $v\in\VV$ to be in the district $\HH_i,i\in[k]$ if there exists an alternative $a$ such that there is one unit of flow in the edge $(u_v, u_{a,i})$. It follows from the construction of the maximum flow instance that the above partitioning the voters into the districts $\HH_i$ is valid (that is, it respects $\pi, s_{min},$ and $s_{max}$) and the maximum margin of victory of any district is at most $t$. Hence the \FD instance is also a \YES instance.
\end{proof}

\section{Experimental Evaluation}

\subsection{Greedy Algorithms}
Given the high 
complexities of \FP, \FD, and \FCD problems, we propose a set of fast greedy heuristics to 
minimize the margin of victory by moving voters between districts, while respecting the constraints on mobility of the users, and connectedness. 
We describe the algorithms below, given an initial partition as input.

\begin{itemize}
\item \GP minimizes the maximum margin of victory of all districts by greedily moving voters between districts (starting from voters in the district having highest margin of victory in the initial partition), allowing voters to move to any district.

\item \GD minimizes the maximum margin of victory of all districts by greedily moving voters between districts, where every voter has a constraint on where they can move. 

\item \GCD minimizes the maximum margin of victory of all districts by greedily moving voters between districts such that no district becomes a disconnected subgraph.
\end{itemize}

\subsection{Data}
We collected three main datasets, two real datasets and one synthetic dataset using graph models. The real datasets consist of the general parliament elections in the U.K. from $2017$ and demographic information of students in public schools of Detroit, MI. We evaluate all three greedy algorithms on the synthetic dataset, but only evaluate \GP and \GD on the real dataset as we lack the (social) network information in them.

\begin{figure*}[!t]
\centering
  \subfloat[Real Data]{\includegraphics[width=0.25\textwidth] {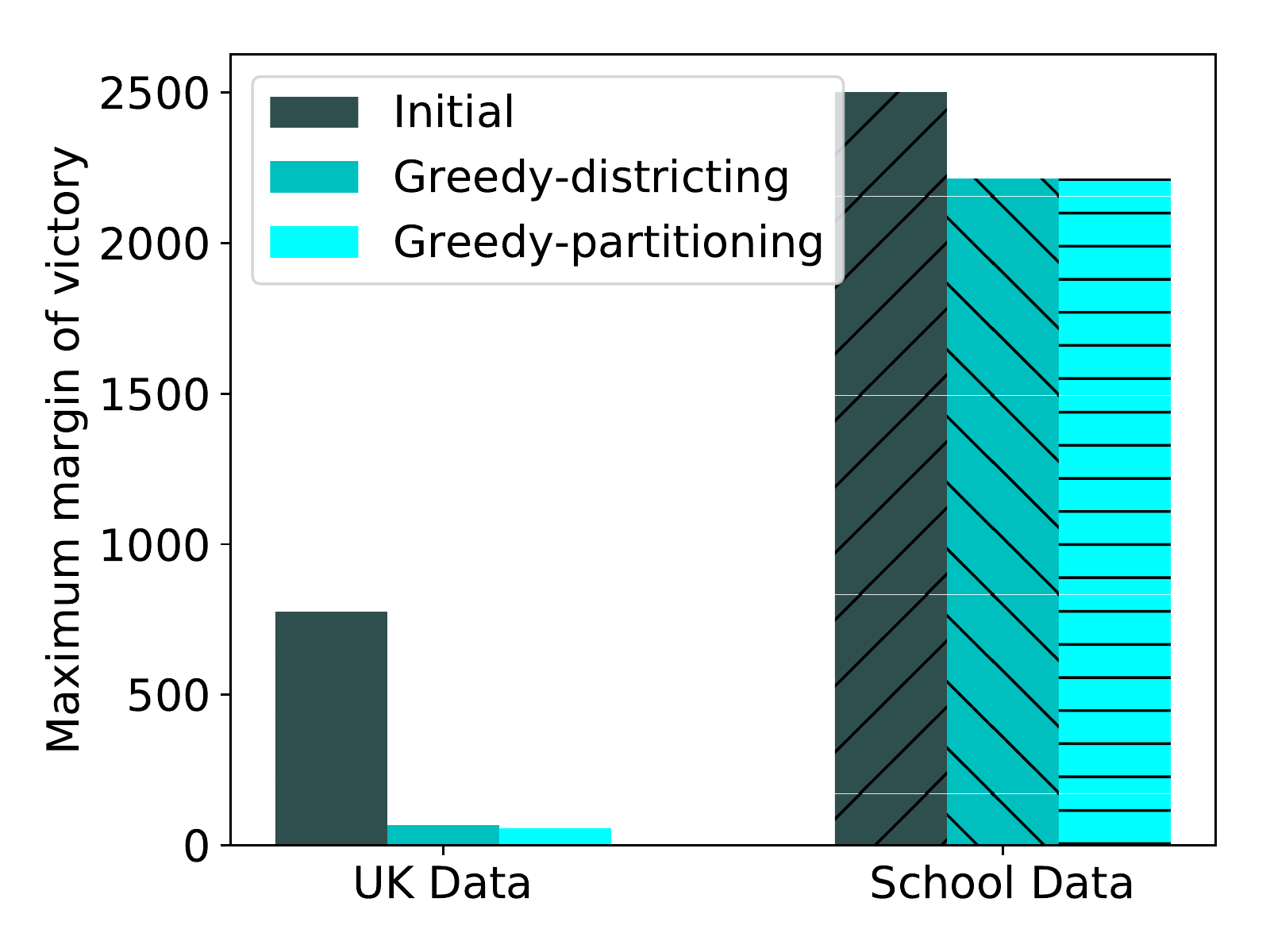}}
  \subfloat[Synthetic Data]{\includegraphics[width=0.25\textwidth] {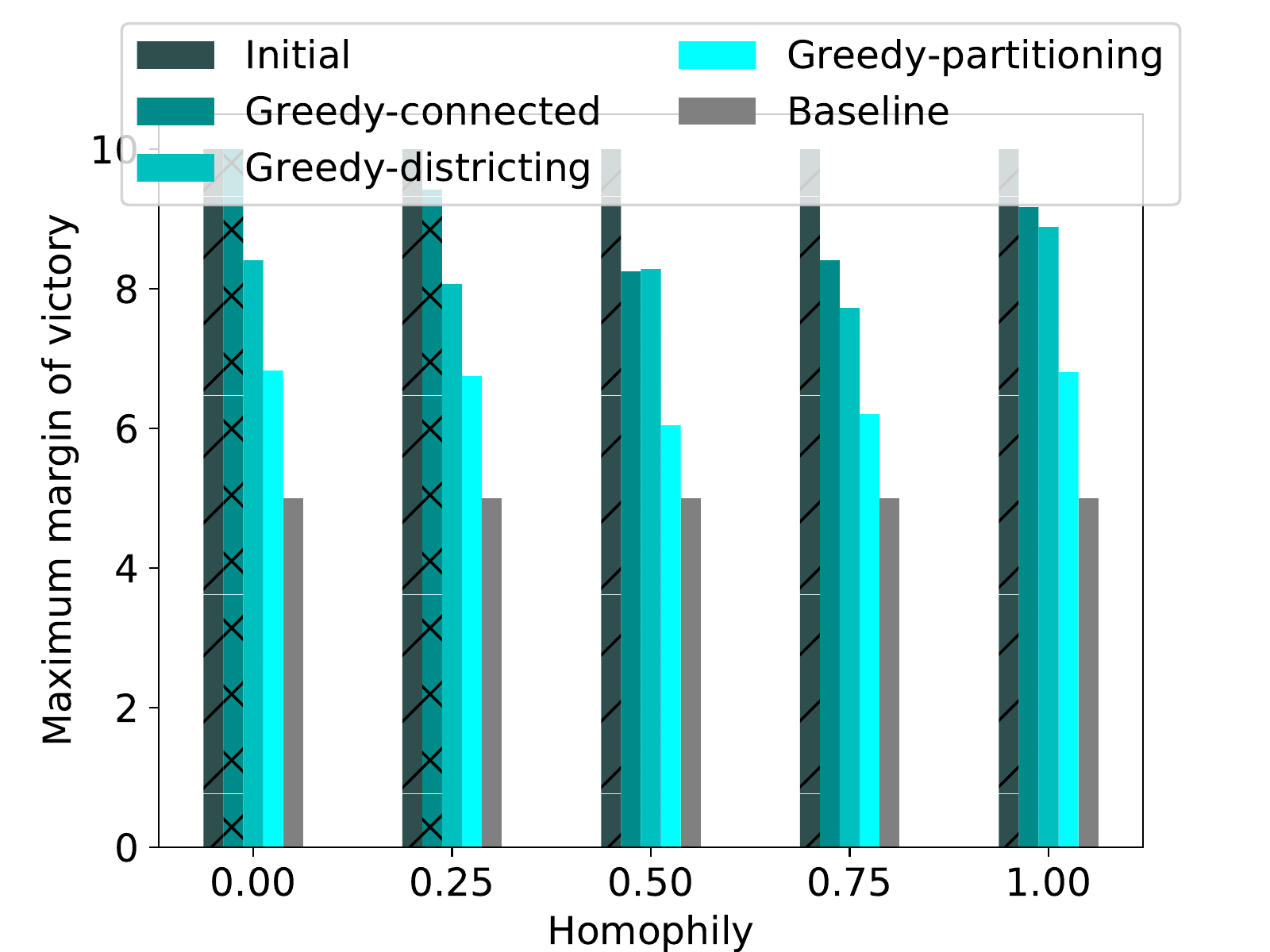}}
  \subfloat[Real Data]{\includegraphics[width=0.25\textwidth] {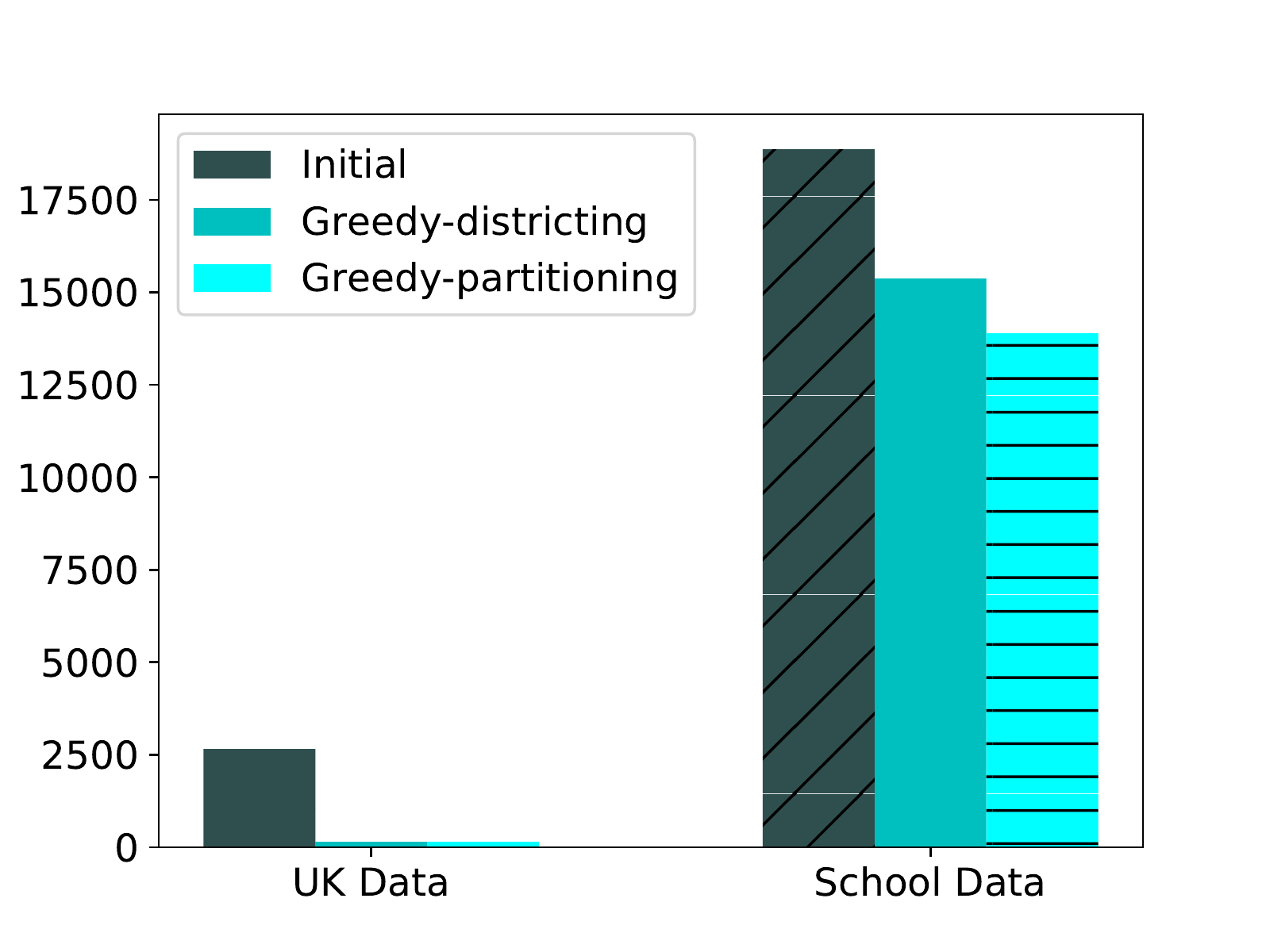}}
  \subfloat[Synthetic Data]{\includegraphics[width=0.25\textwidth] {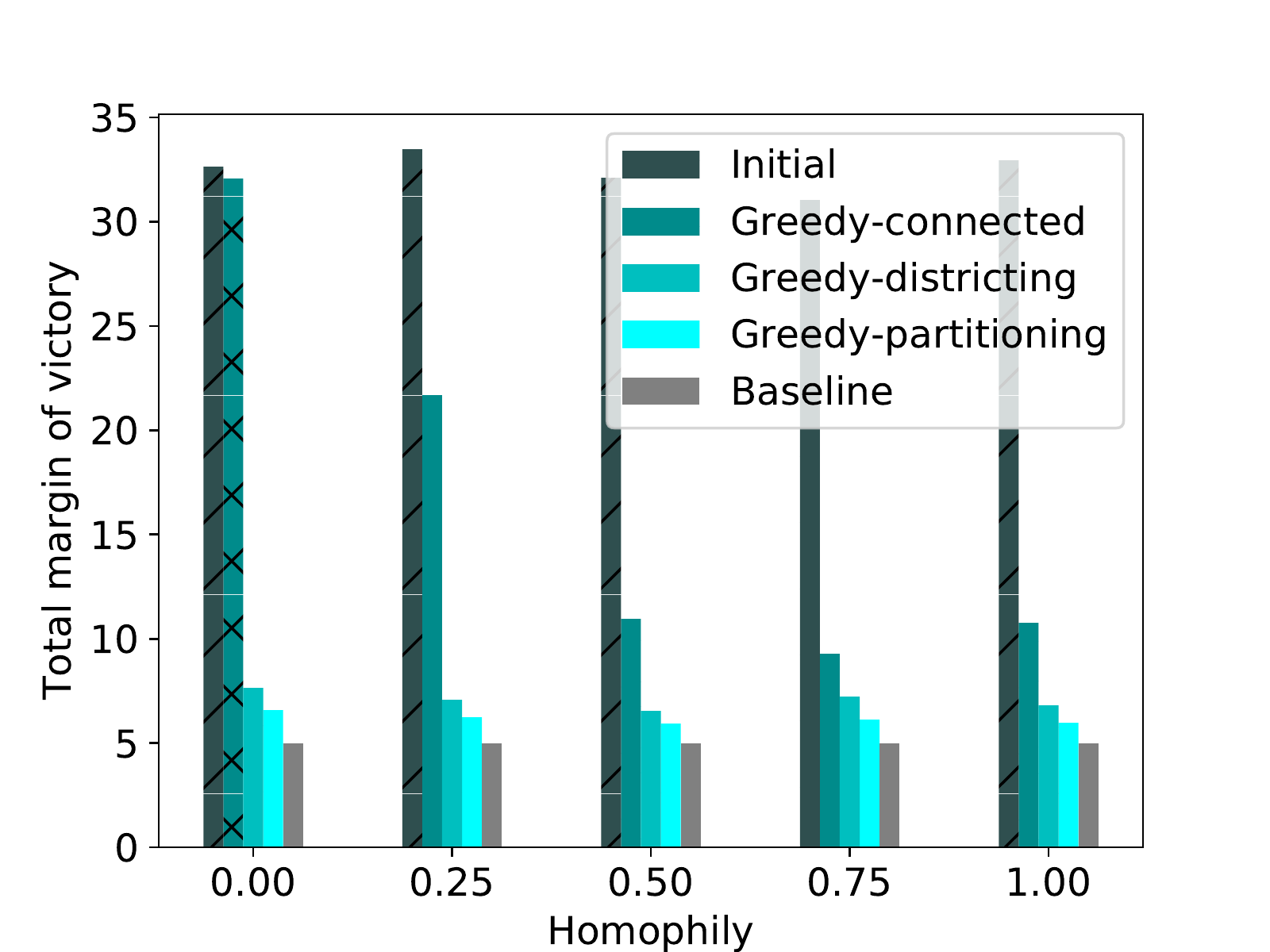}}
  \caption{\bf Maximum margin of victory for all algorithms in (a) real data and (b) synthetic data. Total (sum) of margin of victory in (c) real data and (d) synthetic data.}
  \label{fig:greedya}
  \vspace{-2mm}
\end{figure*}


\paragraph{UK General Elections:} We collected data regarding UK Parliament elections in 2017 from The Electoral Commission ({\tt electoralcommission.org.uk}), 
using constituencies as $districts$ and parties as $alternatives$. Though the votes are cast for individuals, yet in the Parliament number of seats for each party is the number that counts, we are interested in the effect of districting on the distribution of votes over parties rather than over individuals. Knowing the number of votes each party got in each constituency, we simulated the preferences of the voters. 

We tested our algorithm on $10$ neighboring constituencies out of the $650$ in the region of Scotland bordering Edinburgh, which represents a very diverse area in terms of voting preferences. Indeed, these neighboring constituencies voted very differently, each having a clear majority. For example, the distribution of votes in East Lothian was $36.3\%$ Labour, $29.8\%$ Conservative, $30.7$ Scottish National Party (SNP), and $3.1\%$ Liberal Democrats, while in Edinburgh East it was $34.6\%$ Labour, $18.5\%$ Conservative, $42.5\%$ SNP, and $4.2\%$ Liberal Democrats.\footnote{The $10$ constituencies we sampled are: Dumfriesshire, Clydesdale and Tweeddale, Berwickshire, Roxburgh and Selkirk, East Lothian, Midlothian, Edinburgh South, Edinburgh East, Edinburgh North and Leith, Edinburgh South West, Edinburgh West, and Livingston, for which an interactive map with the vote distribution can be found at https://www.bbc.com/news/election-2017-40176349.} We subsampled this dataset, working with a randomized sample of approximately $50,000$ people and we recorded the location of each constituencies (represented by its center), enforcing in \GD that {\it voters can be incentivized to move or to vote only in their closest two constituencies}.

Figure~\ref{fig:greedya}(a) shows that both \GD and \GP are able to reduce the maximum margin of victory of this dataset by approximately $91-92\%$ percent, from $776$ to $67$ and $55$, respectively. Figure~\ref{fig:greedya}(c) shows the effect greedy had on minimizing the overall margin of victory, showing an even larger decrease by almost $95\%$, from $2652$ to $148$ and $135$, respectively. Since \GD represents the more realistic application, we show in Figure~\ref{fig:greedyukdistr} its effect on the voters' distribution in East Lothian and Edinburgh East, showing that it created a stronger opposition for the leading parties (for Labour in East Lothian and for SNP in Edinburgh East). 

\begin{figure}[t]
\centering
\subfloat[East Lothian]{\includegraphics[width=0.25\textwidth] {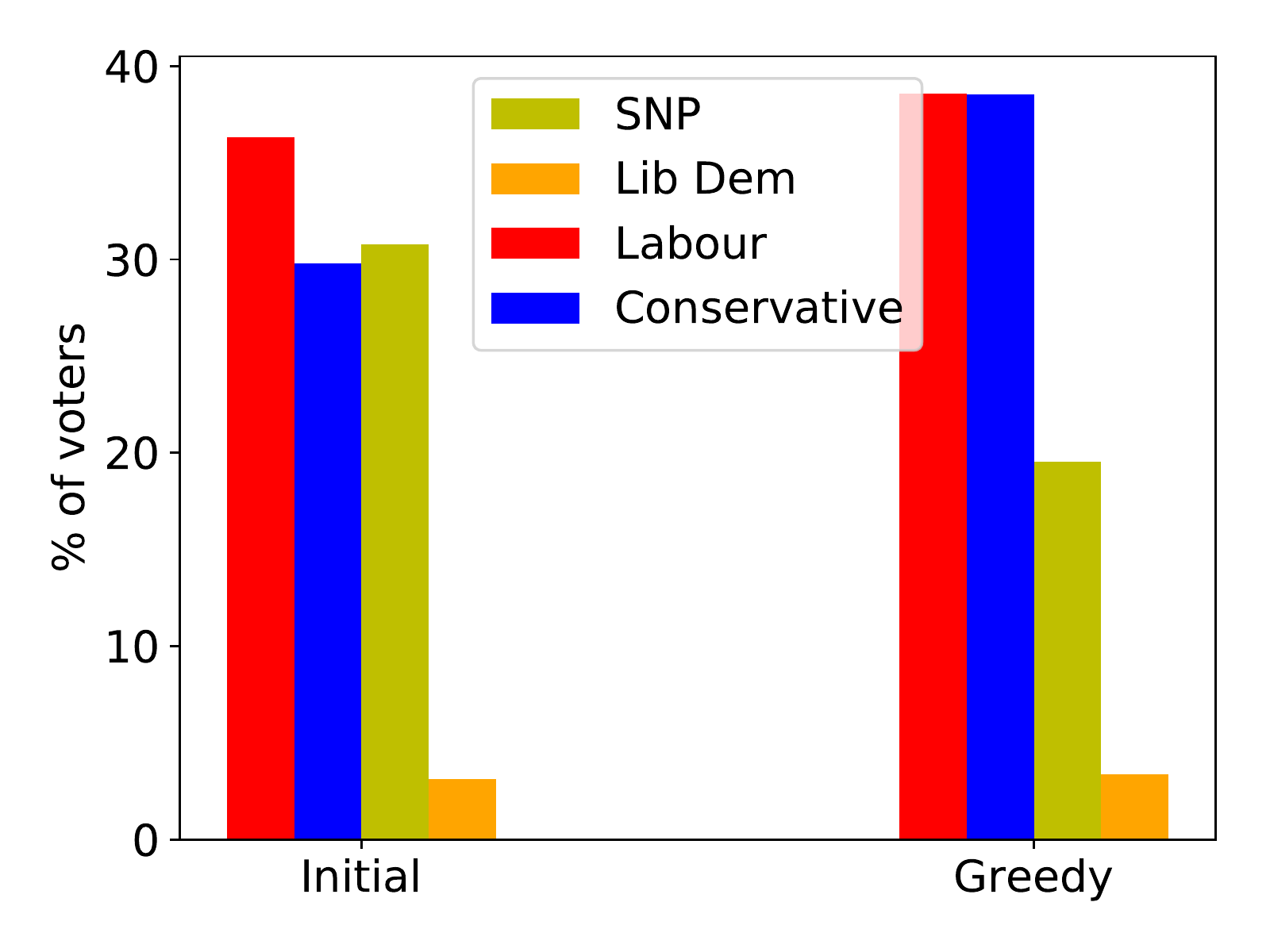}}
\subfloat[Edinburgh East]{\includegraphics[width=0.25\textwidth] {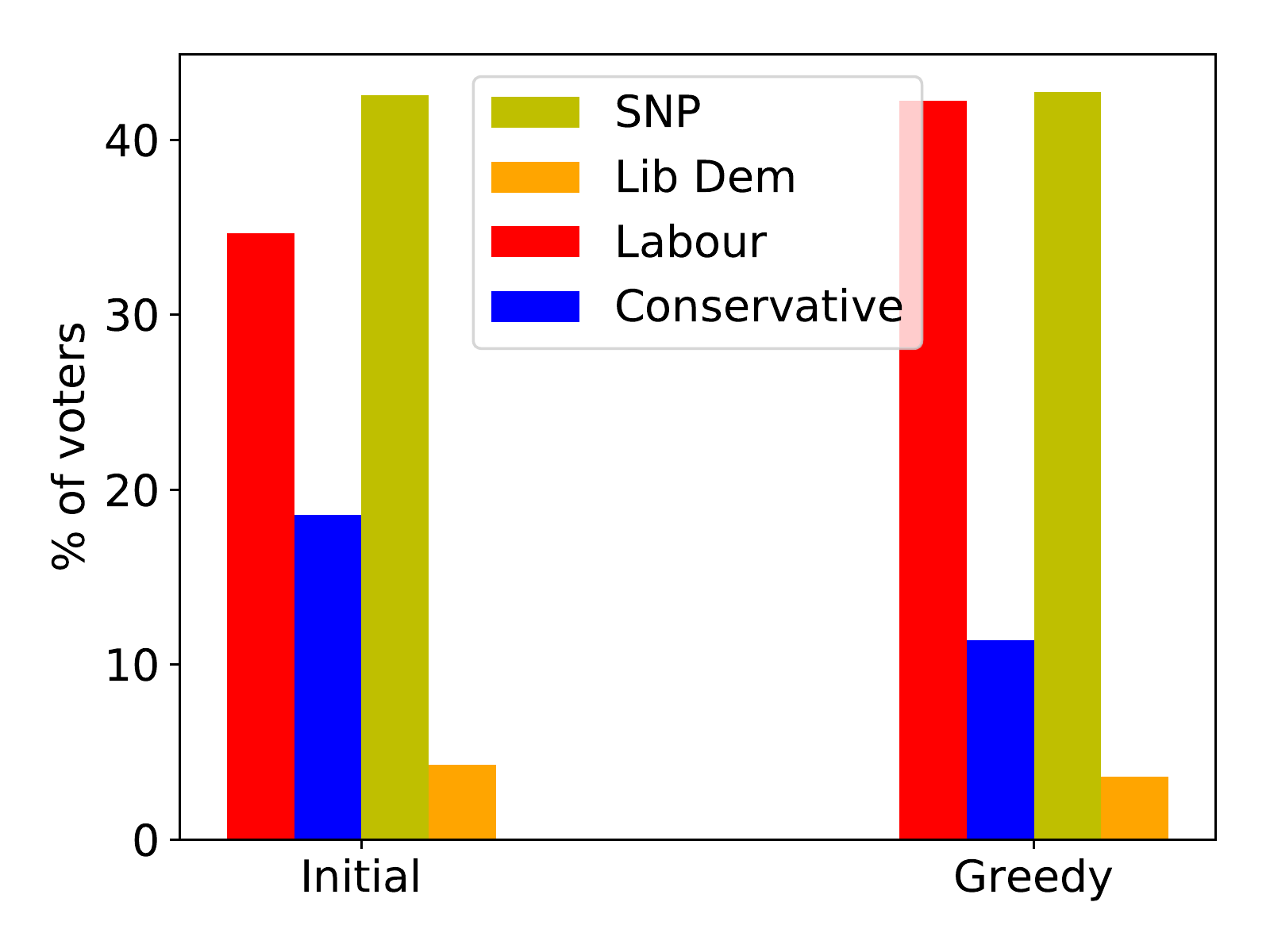}}
\caption{\bf Voters' distribution in UK constituencies before and after applying \GD.}
\vspace{-2mm}
\label{fig:greedyukdistr}
\end{figure}

\begin{figure*}[!htb]
\subfloat[Dove Academy]
{\includegraphics[width=0.33\textwidth] {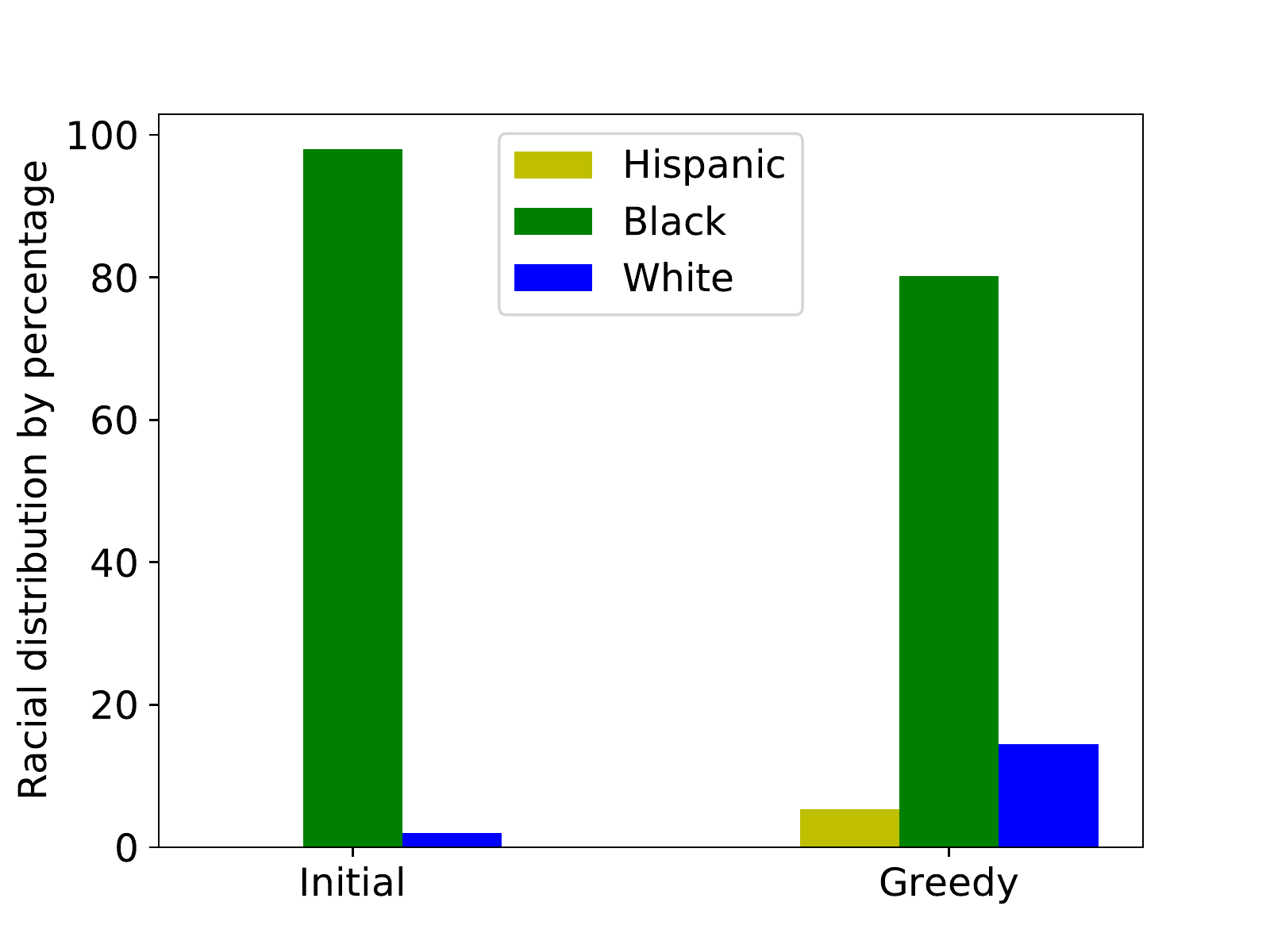}} \subfloat[Universal Academy]
{\includegraphics[width=0.33\textwidth] {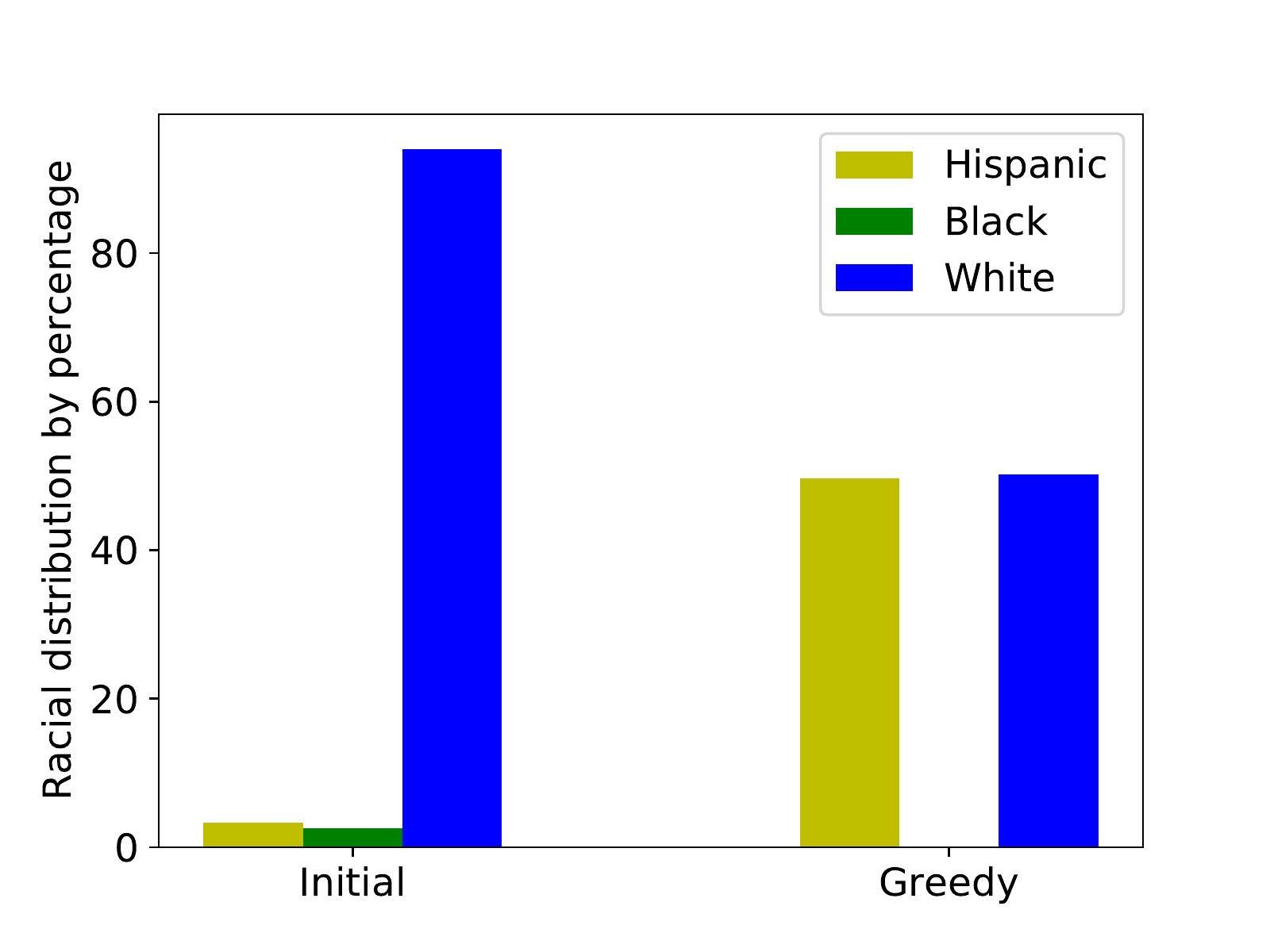}} \subfloat[Cesar Chavez Academy]
{\includegraphics[width=0.33\textwidth] {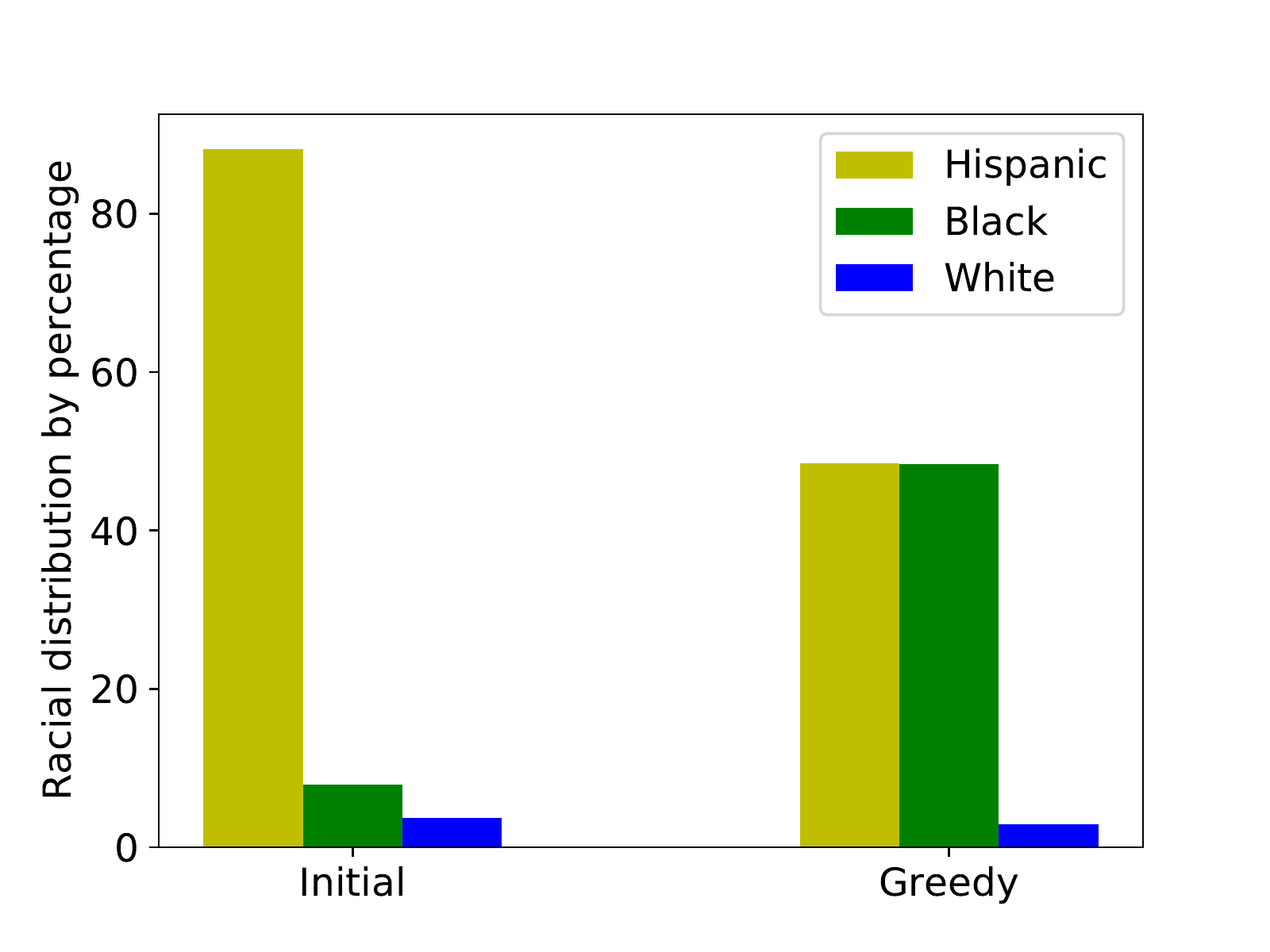}}
\caption{\bf Racial distribution of students in selected schools before and after applying \GD.}
\label{fig:greedyschooldistr}
\vspace*{-2mm}
\end{figure*}

\vspace{-1mm}
\paragraph{US Public School Districting:} 
Neighborhood racial segregation is still widespread in many places in US, 
trickling down to segregation in schools~\cite{franken2019what,richards2014gerrymandering}. One of the main consequences of that is  white-majority schools 
receiving substantially more funding than schools with mostly students of color
~\cite{EdBuild}. 
In this paper, we attempt to show that our algorithms can be used to increase racial diversity (and lower segregation) in schools, if accompanied by government policies that facilitate movement of students between schools~\cite{montgomery1970swann}. 

We collected school data from the National Center for Education Statistics (NCES: {\tt nces.ed.gov/ccd}) about public schools in Detroit, MI, 
which is still one of the cities with highest rates of segregation and most economic and social struggles encountered by minorities~\cite{Detroit1,Detroit2}. We gathered data from $61$ schools in Detroit, each containing between $40$ and $5000$ students, summing up to $41,834$ students and their reported race. We modeled this data in the form of an election, where the $voters$ are the students and the $alternatives$ are their reported race 
(NCES data has $7$ reported races: Asian, Native American, Hispanic, Black, White, Hawaiian, and Mixed-race). Given each student's race, we modeled this `election' as a plurality voting scenario, where each student only `votes' for their reported race. Furthermore, we recorded the location of each school, enforcing in \GD that {\it students can only go to 
their closest $5$ schools.}

Figure~\ref{fig:greedya}(a) shows that both \GD and \GP decrease the total margin of victory by $11-12\%$ on average, from $2,501$ to $2,213$ and $2,311$, respectively. Again, Figure~\ref{fig:greedya}(c) shows the overall decrease in margin of victory by the greedy algorithms, showing a larger decrease of $18-24\%$, from $18,870$ to $15,360$ and $14,376$, respectively. As \GD represents the more realistic scenario, we present in Figure~\ref{fig:greedyschooldistr} its effect on the racial distribution of students in a selection of three schools. we observe that schools containing students from one predominant racial group become more equilibrated: Dove Academy goes from having $98\%$ Black students and $2\%$ White to having $80\%$ Black, $14.5\%$ White, and $5.5\%$ Hispanic students, Universal Academy goes from having $95\%$ White students, $3.5\%$ Hispanic, and $2.5\%$ Black to having $50.15\%$ White and $49.7\%$ Hispanic, while Cesar Chavez Academy goes from having $88\%$ Hispanic students, $8\%$ Black, and $3\%$ White to having $48\%$ Hispanic, $48\%$ Black, and $3\%$ White students.
Given the discrepancy between White-majority schools 
with other schools, we hope that such demographic changes can help in equalizing funding for all students. 

Of course, since minimizing margin of victory only considers the most predominant two races, we may need to enforce an additional diversity constraint to preserve a minimum fraction of students from other races 
in a school (e.g., the $2.5\%$ Black students in Universal Academy may need to stay). 
We leave exploring this direction for future work.

\vspace{-1mm}
\paragraph{Graph Simulations:} To further understand the relationship between margin of victory and population structure, we simulated a set of graphs based on the Erdos-Renyi (ER) graph model, varying the level of connectivity between people with similar political leanings. Unable to vary such a parameter in the real data, we turn to classical graph models to do so. 
Following the methodology in~\cite{cohen2018gerrymandering}, we used the line model to simulate $voters$, $alternatives$, and $voters'$ political affiliation. We then created $50$ instances of the ER graph model, where each node represents a $voter$ and the edges are formed according to the model with an added homophily factor based on the distance between nodes (as simulated by the line model). For every node,  we generated the list of preferences over the candidates 
according to the distance between the voter and the candidates. The inputs to each such graph are the number of voters N (100), the number of candidates C (5), the number of districts K (5), and a homophily parameter.

Such models capture the network and clustering effects exhibited by voter districts in real world~\cite{keegan2016blue,adamic2005political,conover2011political}. We then test all three greedy algorithms in minimizing the maximum margin of victory by re-districting the population in these graphs. We further add a {\bf baseline algorithm} that computes the optimal partition of voters into districts given a network, the districts' size constraints, and mobility of voters. While \GP and \GD come as natural formulations, we argue the need for \GCD as well, since re-districting cannot be done arbitrarily and will be more effective if the population remain connected. 
Thus, in our simulations, the graph models aim to mimic the natural connections individuals make.



We simulated the greedy algorithm for each graph instance, averaging over $10$ iterations the minimal maximum margin of victory that it can reach and compared that to the baseline value. Figure~\ref{fig:greedya}(b) shows the effect of these algorithms in improving the maximum margin of victory aggregated for all graph instances, varying the homophily factor and allowing districts to change up to $20\%$ in size. We observe that no matter how homophilic the initial graph is, greedy is able to successfully reduce the margin of victory for all three algorithms: \GP performs the best as it contains no constraints on mobility of voters, being evaluated close to the baseline value and reducing maximum margin of victory from $10$ to $6-7$ on average, \GD performs second-best, reducing it from $10$ to $8$ on average, while \GCD reduces it from $10$ to $9$ on average, performing worse than the other two due to a tighter connectivity constraint. Figure~\ref{fig:greedya}(d) shows the overall decrease in margin of victory, where the effect is more significant: \GP and \GD achieve a result close to the baseline, while \GCD performs slightly worse, reducing the total margin of victory of $46\%$ on average. The results are qualitatively similar for varying the district size constraints, which we omit due to lack of space.

\section{Conclusion and Future Directions}
In this paper, we tackled the problem of fairly dividing people into groups by conceptualizing the problem in a voting scenario. By modeling the  preferences of people over different candidates, we set the goal to minimize the maximum margin of victory in any group. In doing so, we provide a rigorous framework to reason about the complexity of the problem, 
showing that dividing people with constraints on their neighborhood or their connections is NP-complete in the most general case, and admit  polynomial time algorithms for particular cases. 

Furthermore, we develop and evaluate fast greedy heuristics 
to minimize the maximum margin of victory in practical scenarios. 
Indeed, our results show significant improvement of the margin of victory in the case of elections and school choice, as well as in synthetic experiments. In the case of elections, minimizing margin of victory leads to better representation of opposing parties in electoral districts, where we notice that the opposing parties in the UK can gain more power through re-districting. In the case of school choice, we model students demographic information as an election, where each student 'votes' (or prefers) their own demographic attribute, and show that our greedy algorithms are able to provide more diversity in highly segregated schools. While government policies are ultimately crucial in reducing segregation, we hope that this quantitative analysis can motivate them and show their potential efficacy. 

Multiple directions  remain open for future work. We plan to (i) include an analysis of the social connections in real datasets that may further constrain people's mobility, and (ii) extend synthetic experiments to other graph models as well. Finally, it would be interesting to (iii) measure the effect of minimizing the margin of victory on different gerrymandering metrics, 
and (iv) investigate whether lowering racial segregation would lead more equitable distribution of revenues to public schools.

\balance
\bibliographystyle{aaai}
\bibliography{Main}

\end{document}